\documentclass[conference]{IEEEtran}    
\usepackage{amsmath, amssymb, amsfonts, amsthm}
\usepackage{bm}
\usepackage{xcolor}
\usepackage{graphicx}
\usepackage{circuitikz}

\usepackage{enumitem}

\newtheorem{lemma}{Lemma}

\title{Optimality of Bang-Bang Switching for Breaking the Chu Limit via Time-Modulated Matching}

\author{\IEEEauthorblockN{Mohamed Akrout$^*$ and Miguel Rodrigo Castellanos$^*$}\\ \vspace{-0.4cm}
\IEEEauthorblockA{\textit{EECS Department}, \textit{University of Tennessee}, Knoxville, TN, USA\\
\{makrout, mrcastellanos\}@utk.edu}}
\vspace{-0.2cm}
\begin{document}

\maketitle
\def\thefootnote{*}\footnotetext{These authors contributed equally to this work}

\begin{abstract}
This paper shows that surpassing the Chu limit on $Q$-factors via time-modulated matching requires non-smooth switching strategies. We derive a nonlinear differential condition to show that differentiable modulation functions are sub-optimal, then show that the optimal switching trajectory for maximizing the violation of the Chu limit is a piecewise-constant (Bang-Bang) profile. Finally, we establish an upper-bound connecting the antenna size, switching speed, and bit error rate, which reveals that switching time becomes longer as antennas become electrically smaller.
\end{abstract}

\section{Introduction}
\subsection{Motivation and prior work}

High-gain wideband antennas are essential for 6G networks. Transmission bandwidths and array sizes have grown steadily over recent decades to meet wireless performance demands. Future devices will operate across millimeter-wave and upper mid-band for high-rate communications while array sizes will continue growing to support MIMO gains. However, fundamental limits on the bandwidth-efficiency product make it difficult to simultaneously increase the gain and bandwidth of a fixed-size antenna \cite{skrivervik2002pcs}. Breaking through this trade-off could significantly enhance future wireless system performance.

Circumventing the assumptions that underlie basic antenna performance bounds allows devices to surpass these fundamental constraints. The Chu limit defines a lower bound on the quality factor $Q$ of electrically small antennas (ESAs) based on their physical volume \cite{chu1948physical}, which, in turn, limits the maximum bandwidth of the device. The derivation of this limit is based on the assumption that the antenna under consideration is a passive linear time-invariant (LTI) system. Prior work has shown that introducing time-varying \cite{LiEtAlBeyondChusLimitWith2019,FrittsEtAlSpaceTimeModulationMultimode2025,mostafa2023antenna} or nonlinear components \cite{ChurchEtAlUhfElectricallySmallBox2014,LoghmanniaManteghiBroadbandParametricImpedanceMatching2022} can improve the bandwidth and efficiency of an antenna. Direct antenna modulation (DAM), in which the antenna excitation is modulated to affect its radiation properties, has also been used for bandwidth broadening \cite{SchabEtAlEnergySynchronousDirectAntenna2020, BroadbooksEtAlNovelContinuousDirectAntenna2025}. The use of these techniques, however, also makes antenna design and analysis more difficult.

We focus on time-varying circuit components as a means to exceed the Chu limit. Previous studies have effectively used time-varying matching networks to increase the bandwidth of ESAs. Time-varying inductors and capacitors have been shown to have the ability to increase the antenna gain-bandwidth product while maintaining circuit stability \cite{LiEtAlBeyondChusLimitWith2019, MekawyEtAlParametricEnhancementRadiationFrom2021, JayathurathnageEtAlTimeVaryingComponentsEnhancing2021}. The design in \cite{FrittsEtAlSpaceTimeModulationMultimode2025} further improves on this concept by using multiple time-varying capacitors with time delays between the modulation function applied to each component. Prior research in this context has largely focused on sinusoidal modulation of the time-varying components. Significant gains may be achieved by optimizing the modulation waveform to maximize different antenna performance metrics.

\subsection{Contributions}
In this paper, we analyze the efficacy of arbitrarily-modulated time-varying inductors in improving ESA $Q$-factors. We first discuss the circuit model of an electrical Chu's antenna with a time-modulated inductor placed at the input and define the time-varying $Q$-factor. We demonstrate the limitations of continuous modulation functions and show that they yield relatively small improvements in terms of the average $Q$-factor. We formulate a functional optimization problem to solve for the modulated inductance that exceeds the Chu bound by a predetermined multiplicative factor. We show that the optimal solution is an on-off modulation function and provide insights into the optimal modulation trajectory. We numerically quantify the trade-off between the $Q$-factor excess $\alpha$ and the modulation time $T_0$. We also discuss the bit error rate (BER) for BPSK DAM over a switching speed range as antennas become electrically smaller.

\section{System model}

We consider an ESA that fits within a sphere of radius $\rho$ and only radiates the $\text{TM}_{01}$ spherical mode. A Chu antenna can be modeled as an RLC circuit with resistance $R$, capacitance $C = \frac{\rho}{cR}$, and inductance $L = \frac{\rho R}{c}$, where $c$ is the speed of light in a vacuum. We assume that the antenna feed is connected to a time-varying inductor with inductance $L(t)$. Given a static inductance value $L_0$ and modulating inductance function $h(t)$, we assume that $L(t)$ has general form
\begin{equation}\label{eq:general-inductance}
    L(t) = L_0\,(1+h(t)).
\end{equation}
Fig. \ref{fig:ltv_chu} shows the equivalent circuit model for this antenna.

The $Q$-factor of a resonant antenna is defined as the ratio of the stored energy to the radiated energy per cycle. Letting $W_{\textrm{stored}}(t)$ denote the instantaneous stored energy and $P_{\textrm{total}}(t)$ denote the instantaneous radiated power, we define the time-varying quality factor $Q(t)$ of the circuit at angular frequency $\omega$ as
\begin{equation}\label{eq:Q1}
Q(t) = \frac{\omega \cdot W_{\textrm{stored}}(t)}{P_{\textrm{total}}(t)}.
\end{equation} 

\noindent We can decompose the instantaneous $Q$-factor in terms of the static Chu $Q$-factor lower bound $Q_{\textrm{Chu}}$ and the time-dependent factor $\Delta Q(t)$, giving
\begin{equation}
    Q(t) = Q_{\textrm{Chu}} + \Delta Q(t).
\end{equation}
Prior work shows evidence that the $Q$-factor can be increased and decreased through different modulation strategies \cite{LiEtAlBeyondChusLimitWith2019, FrittsEtAlSpaceTimeModulationMultimode2025}.

At each time $t$, the instantaneous $Q$-factor differs from the Chu limit by $\Delta Q(t)$. Understanding the extent to which this excess can be exploited requires a careful examination of two key parameters: the modulation function $h(t)$ and the modulation period $T_0 = t_2 - t_1$. For simplicity we assume that $t_1 = 0$ in the remainder of the paper. The excess $\Delta Q(t)$ is inherently bounded by the rate at which the system can reconfigure its reactive energy storage. A slower modulation allows the inductor to remain near its maximum value for longer durations, thereby sustaining a larger effective $Q$ over a full cycle. Conversely, faster modulation may limit the peak achievable excess due to the finite response time of the system. Moreover, the modulation period $T_0$ determines the temporal window over which the Chu limit can be violated, suggesting a fundamental trade-off between the magnitude of $\Delta Q$ and the modulation time. To formulate this problem mathematically, we seek the pair \big($T_0,\alpha$\big) that satisfy
\begin{equation}\label{eq:condition}
        \int_0^{T_0} \Delta Q(t)\,\textrm{d}t \leq -\alpha \,Q_{\textrm{Chu}},
\end{equation}
where $\alpha >0$ represents the $Q$-factor excess as a percentage of the Chu's $Q$-factor. Our goal is then to find the smallest modulation time $T_0$ satisfying \eqref{eq:condition}. In the following, we discuss different modulation strategies and their effect on the instantaneous $Q$-factor.

\section{Limitations of Sinusoidal Modulation}

\begin{figure}
\begin{center}
\begin{circuitikz}[scale=0.65] 
\draw
(0,0) to [short, *-] (6,0)
to [R, l_=$R$] (6,3)
(0,3) to [short, *-, i=$\hspace{-0.5cm}I(t)$] (1.5,3) to [variable cute inductor, l = $L(t)$] (2,3) to [C, l = $C$] (5,3) to (6,3)
(5,0) to [L, -*, l =$L$] (5,3)
(0,3) to [american voltage source, l = $V(t)$] (0,0);
\end{circuitikz}
\end{center}
\caption{\label{fig:ltv_chu} Circuit diagram of a Chu antenna connected to a time-varying inductor.}
\vspace{-0.2cm}
\end{figure}

We first assume a sinusoidal modulation function
\begin{equation}\label{eq:Lt}
    L(t) = L_0\,(1+\textrm{cos}(\omega_M\,t)),
\end{equation}
where $\omega_M$ is the modulation frequency. The instantaneous $Q$-factor $Q(t)$ is then given by
\begin{equation}\label{eq:inst-Q-inductor}
    Q(t) = Q_{\textrm{Chu}} + \underbrace{\frac{\omega\,L(t)}{R}}_{=~\Delta Q(t)}.
\end{equation}
After injecting the expression of $L(t)$ from (\ref{eq:Lt}) in (\ref{eq:inst-Q-inductor}), the inequality in (\ref{eq:condition}) becomes
\begin{equation}\label{eq:condition2}
    \omega_M\,T_0 + \textrm{sin}(\omega_M\,T_0) \leq -\omega_M\, \alpha\,Q_{\textrm{Chu}}.
\end{equation}
Using this inequality, we can determine the minimum $T_0$ required to exceed the Chu limit by a factor $\alpha$. 

Fig. \ref{fig:Qchu-excess} shows how the excess fraction $\alpha$ varies as a function of the modulation time $T_0$ at frequency $f = 1$ GHz. It is seen how achieving a very small excess above the Chu limit requires a minimum modulation time $T_0$ on the order of $1/f$. For a small yet not negligible excess fraction, $T_0$ must approach microseconds to seconds. This non-linear relationship between $\alpha$ and $T_0$ suggests a fundamental trade-off between the achievable $Q$-factor excess and the temporal window over which the system must maintain the time-varying condition. In other words, exceeding the Chu's $Q$-factor limit comes at the cost of modulation over a sufficiently long interval $T_0$.\vspace{-.3cm}

\begin{figure}[h!]
    \centering
    \includegraphics[scale=0.23]{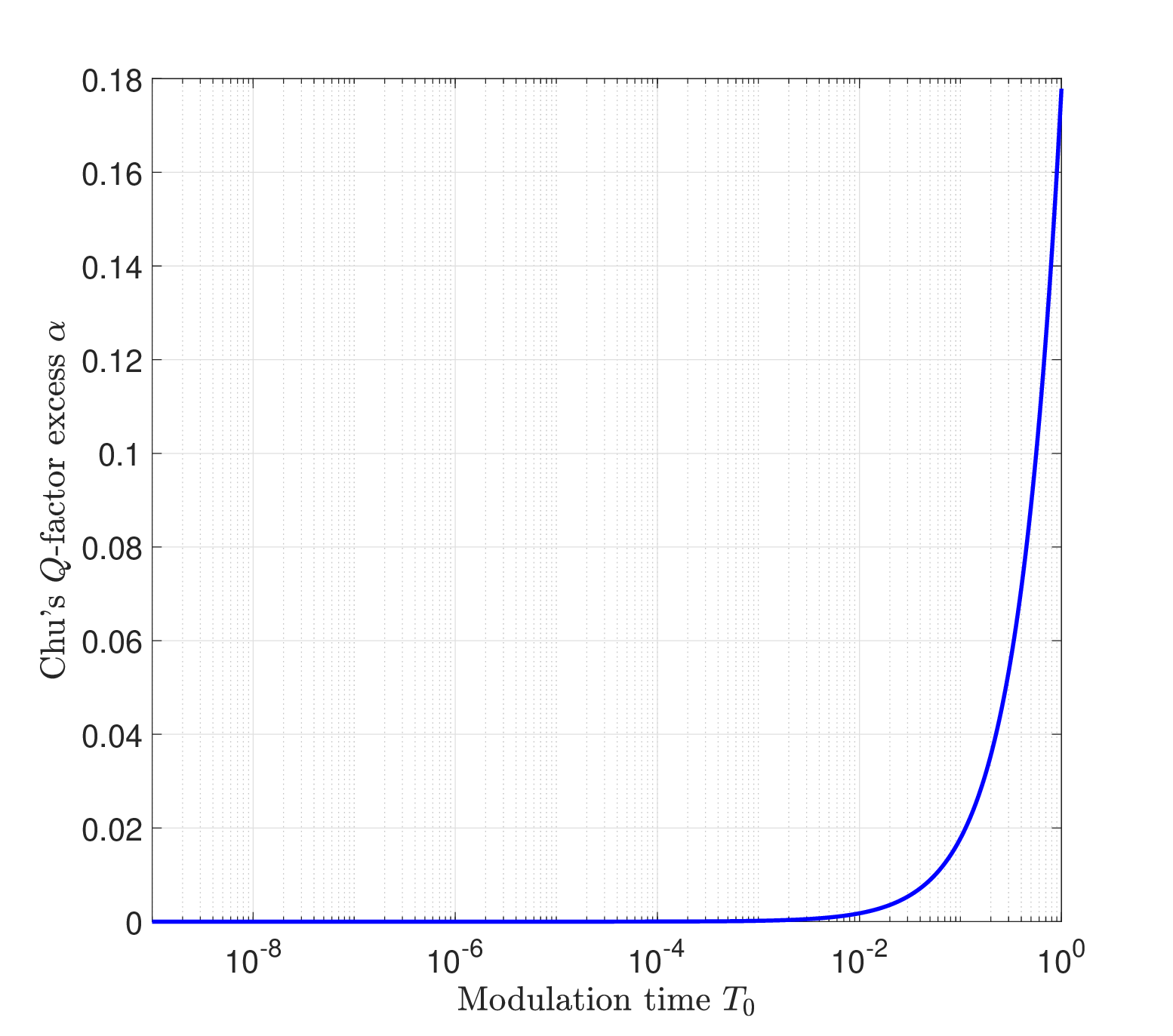}
    \caption{Modulation time $T_0$ required to achieve a given $Q$-factor excess $\alpha$ for a time-varying inductor with $L(t) = L_0(1 + \cos(\omega_M t))$ with $f=1\textrm{GHz}$, $\omega_M = \omega$, and $\rho = \lambda/10$.}
    \label{fig:Qchu-excess}
\end{figure}

The sinusoidal modulation analyzed until now may not be optimal for maximizing $Q$-factor excess factor $\alpha$. This is because the oscillatory term $\sin(\omega_M T_0)$ creates favorable and unfavorable modulation periods. This raises a natural question: can we design a time-varying inductance $L(t)$ to achieve larger excess than the cosine variation? 

\section{Optimal Inductance Modulation Condition}

In this section, we derive optimal time-varying inductance $L(t)$ that maximizes the Chu limit excess. This approach systematically explores the fundamental limits of time-varying reactance for surpassing the Chu limit.

The time-varying inductor affects both the antenna's radiated power and the stored energy. Let $L_a$ be the equivalent antenna inductance and $R_a$ be the radiation resistance such that $Q_\textrm{Chu} = \frac{\omega L_a}{R_a}$.
The total reactive energy $W_{\textrm{stored}}(t)$ of this time-modulated system is the sum of two distinct energy reservoirs: the antenna near-field energy $W_a$ and the modulating inductor energy $W_L(t)$. For a current $I(t)$, we have
\begin{equation}\label{eq:formula-Wstored}
        W_{\textrm{stored}}(t) = W_a(t) + W_L(t)
        = \frac{1}{2} L_a |I(t)|^2 + \frac{1}{2}L(t)\,|I(t)|^2.
\end{equation}
Substituting (\ref{eq:general-inductance}) back into (\ref{eq:formula-Wstored}) gives
\begin{equation}\label{eq:Wstored}
    W_{\textrm{stored}}(t) = \frac{1}{2} |I(t)|^2 \Big(L_a + L_0(1 + h(t))\Big).
\end{equation}
Similarly, the total power $P_{\textrm{total}}(t)$ leaving the system is the sum of the radiated power $P_{\textrm{rad}}(t)$, the internal ohmic losses of the modulator $P_{\textrm{loss}}(t)$, and the power $P_{\textrm{mod}}(t)$ injected by an external modulator to perform work on the magnetic field. We then have
\begin{equation}\label{eq:tot-power1}
    \begin{aligned}
        P_{\textrm{total}}(t) &= P_{\textrm{rad}}(t) + P_{\textrm{loss}}(t) + P_{\textrm{mod}}(t)\\
        &= \frac{1}{2}R_a |I(t)|^2 + \frac{1}{2}R_m |I(t)|^2 + P_{\textrm{mod}}(t),
    \end{aligned}
\end{equation}
where $R_m$ represents the modulator's internal loss resistance. 

To determine the modulation power $P_{\textrm{mod}}(t)$, we examine the instantaneous voltage across the time-varying inductor, derived from the rate of change of the magnetic flux $\Phi = L(t)I(t)$ as
\begin{equation}
V_L(t) = \frac{\textrm{d}\Phi}{\textrm{d}t} = L(t)\frac{\textrm{d}I(t)}{\textrm{d}t} + I(t)\frac{\textrm{d}L}{\textrm{d}t}.
\end{equation}
The instantaneous power $P_{\textrm{mod}}(t)$ is then
\begin{equation}
    \begin{aligned}
    P_{\textrm{mod}}(t) &= V_L(t)\,I(t)= L(t)\,I(t)\,\frac{\textrm{d}I}{\textrm{d}t} + I^2(t)\,\frac{\textrm{d}L}{\textrm{d}t}.
    \end{aligned}
\end{equation}
While the first term $L(t)\,i(t)\,\frac{\textrm{d}I(t)}{\textrm{d}t}$ represents the standard reactive power (which averages to zero over a cycle), the second term $I^2(t)\frac{\textrm{d}L}{\textrm{d}t}$ represents the real power exchange required to modify the energy state of the inductor. As a result, the total power in (\ref{eq:tot-power1}) becomes
\begin{equation}
    P_{\textrm{total}}(t) = \frac{1}{2}R_a |I(t)|^2 + \frac{1}{2}R_m |I(t)|^2 + \frac{1}{2}\frac{\textrm{d}L}{\textrm{d}t} |I(t)|^2.
\end{equation}
Substituting the derivative of the inductance $L(t)$ in (\ref{eq:general-inductance}) yields
\begin{equation}\label{eq:Ptotal}
P_{\textrm{total}}(t) = \frac{1}{2} |I(t)|^2 \cdot \Big( R_a + R_m + L_0\, h'(t) \Big).
\end{equation}

\noindent Finally, by substituting the expressions of $W_{\textrm{stored}}(t)$ in (\ref{eq:Wstored}) and $P_{\textrm{total}}(t)$ in (\ref{eq:Ptotal}) back into (\ref{eq:Q1}), we obtain
\begin{equation}\label{eq:Q-general-f}
Q(t, h) = \frac{\omega \big( L_a + L_0(1 + h(t)) \big)}{R_a + R_m + L_0 \,h'(t)} \triangleq \frac{N(h)}{D(h')},
\end{equation}
where we make the dependence of $Q$ on $h(t)$ explicit. This expression shows how the derivative $h'(t)$ acts as an additional resistance, effectively penalizing rapid changes in the impedance state. In the rest of this paper, we will call this the modulation resistance $R_{\textrm{mod}} = L_0 \,h'(t)$.

\subsection{Performance bounds}

The $Q$-factor can be bounded under the condition that the modulation function is Lipschitz continuous, as shown in the following lemma.

\begin{lemma}\label{lemma:Q-bound} Let the modulation function $h(t)$ be Lipschitz continuous with derivative bounded as $R_{\textrm{lwr}} \leq L_0 h'(t) \leq R_{\textrm{upp}}$. In addition, assume that modulation function does not result in negative instananeous power. Then, the average $Q$-factor can be bounded as
\begin{equation}
    0 \leq \frac{\omega(L_a + L_0(1 +h_0))}{R_a + R_m - R_{\textrm{lwr}}} \leq Q \leq \frac{\omega(L_a + L_0(1 +h_0))}{R_a + R_m + R_{\textrm{upp}}},
\end{equation}
where $h_0 = \frac{1}{T_0}\int_0^{T_0} h(t) \, \textrm{d}t$.
\end{lemma}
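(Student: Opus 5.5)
The plan is to work directly from the closed form (\ref{eq:Q-general-f}), $Q(t,h)=N(h)/D(h')$, and to bound numerator and denominator separately. The numerator $N(h)=\omega\big(L_a+L_0(1+h(t))\big)$ is affine in $h$. Time-averaging it over $[0,T_0]$ therefore gives exactly $\omega\big(L_a+L_0(1+h_0)\big)$, with $h_0$ as in the statement. This is the common numerator in both bounds, and it explains why only the mean $h_0$ of the modulation appears. I will take the average $Q$-factor to be the cycle-averaged stored energy times $\omega$ divided by the cycle-averaged total power, i.e. $\omega\langle W_{\textrm{stored}}\rangle/\langle P_{\textrm{total}}\rangle$ with (\ref{eq:Wstored}) and (\ref{eq:Ptotal}). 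Under the usual slowly-varying-envelope convention, the common factor $\tfrac12|I(t)|^2$ is treated as constant over the averaging window and cancels. The quantity to bound is then $\omega\big(L_a+L_0(1+h_0)\big)/\big(R_a+R_m+\langle L_0h'\rangle\big)$.

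Next I bound the denominator using the hypotheses. Lipschitz continuity makes $h$ absolutely continuous, so $h'$ exists almost everywhere and is integrable. The pointwise bound $R_{\textrm{lwr}}\le L_0h'(t)\le R_{\textrm{upp}}$ then integrates to the same bounds on the average modulation resistance $\langle R_{\textrm{mod}}\rangle$. Hence $D$ lies between $R_a+R_m+R_{\textrm{lwr}}$ and $R_a+R_m+R_{\textrm{upp}}$.

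The assumption of no negative instantaneous power says $D(h')\ge 0$ almost everywhere, so the extreme denominator is nonnegative. Physical positivity of $L_a$, $L_0$ and $1+h\ge 0$, the latter needed for a realizable inductance, makes the numerator nonnegative. Together these give the outer inequality $0\le\cdots$. Inverting the denominator bounds, which is legitimate because everything is nonnegative, yields the two-sided bound: the largest denominator gives the bound with $R_{\textrm{upp}}$, and the smallest gives the bound with $R_{\textrm{lwr}}$.

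The main obstacle is bookkeeping of signs and of which average is meant. As written, the lower bound has $R_a+R_m-R_{\textrm{lwr}}$ in the denominator, and the smaller $Q$ corresponds to the larger denominator. So I will need to fix the convention in which $R_{\textrm{lwr}}$ denotes the magnitude of the most negative modulation resistance, i.e. $L_0h'\ge -R_{\textrm{lwr}}$, and then check carefully which endpoint gives the upper bound and which the lower.

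A second point to verify is whether the average is $\langle Q(t)\rangle$ or the ratio of averages. If it is $\langle Q(t)\rangle$, I would instead bound $Q(t)$ pointwise by $N(h(t))/(R_a+R_m+R_{\textrm{upp}})$ and $N(h(t))/(R_a+R_m+R_{\textrm{lwr}})$, using $D>0$, and then integrate. Since $N$ is affine, the time average again collapses to $h_0$, so the same bounds follow. This pointwise route avoids the envelope assumption, and I would present it as the primary argument.
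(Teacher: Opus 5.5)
Your primary route is exactly the paper's one-line argument: bound $Q(t)=N(h)/D(h')$ pointwise using $R_{\textrm{lwr}}\le L_0h'\le R_{\textrm{upp}}$, average over $[0,T_0]$, and use that $N$ is affine in $h$ so only $h_0$ survives. Your use of absolute continuity of Lipschitz $h$ is fine. The gap is the step you postpone, ``check carefully which endpoint gives the upper bound and which the lower.'' That check cannot come out in favor of the printed chain. With $0<R_a+R_m+R_{\textrm{lwr}}\le D(h')\le R_a+R_m+R_{\textrm{upp}}$ and $N\ge 0$, your argument yields
\[
\frac{\omega\big(L_a+L_0(1+h_0)\big)}{R_a+R_m+R_{\textrm{upp}}}\le\frac{1}{T_0}\int_0^{T_0}Q(t)\,\textrm{d}t\le\frac{\omega\big(L_a+L_0(1+h_0)\big)}{R_a+R_m+R_{\textrm{lwr}}}.
\]
So the fraction with $R_{\textrm{upp}}$ is a \emph{lower} bound, whereas the statement calls it an upper bound.

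Your proposed re-convention $L_0h'\ge -R_{\textrm{lwr}}$ only repairs the sign in the other denominator. It makes $\omega(L_a+L_0(1+h_0))/(R_a+R_m-R_{\textrm{lwr}})$ the smallest-denominator bound, i.e.\ the \emph{upper} bound, so the chain is still reversed. In fact the printed upper bound is false in general. On any set of positive measure where $L_0h'<R_{\textrm{upp}}$ (and $N>0$), you have $Q(t)>N/(R_a+R_m+R_{\textrm{upp}})$ strictly, so that bound can hold only if $L_0h'=R_{\textrm{upp}}$ almost everywhere. A triangle wave with slopes $\pm s/L_0$, $0<s<R_a+R_m$, is a concrete counterexample even with tight bounds $R_{\textrm{lwr}}=-s$, $R_{\textrm{upp}}=s$. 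What your argument proves, and what the paper's substitution gives when done carefully, is the chain with the two outer fractions interchanged. You should state and prove that version explicitly rather than aim for the printed ordering.

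A smaller point concerns positivity. ``No negative instantaneous power'' gives $D(h')\ge 0$ a.e. That implies $R_a+R_m+R_{\textrm{lwr}}\ge 0$ only if $R_{\textrm{lwr}}$ is the essential infimum of $L_0h'$; a looser lower bound can make this denominator negative. Even then you need strict positivity to divide. Either take $R_{\textrm{lwr}}$ tight and assume strictly positive total power, or simply assume $R_a+R_m+R_{\textrm{lwr}}>0$; this also delivers the leading $0\le$ in the corrected chain.
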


\begin{proof}
The proof follows immediately by substituting the bounds on the modulation resistance into the average $Q$-factor $Q = \frac{1}{t_2 - t_1} \int_{t_a}^{t_2} Q(t)\, \textrm{d}t$ for $t_1=0$ and $t_2=T_0$.
\end{proof}

Lemma \ref{lemma:Q-bound} demonstrates that changes to the $Q$-factor from the time-varying inductor are inherently limited by the maxima and minima of the modulation resistance. These bounds are generally loose, however, since large values of the modulation resistance will concurrently affect the modulation inductance $L_0(1 + h(t))$. This effect will lead to overall small changes in the $Q$-factor, as shown in the case of sinusoidal modulation. We now derive the form of the optimal modulation function.

\subsection{Variational optimization}

We define the functional $\mathcal{J}[h]$ to be maximized over a modulation period $T_0$ as
\begin{equation}\label{eq:condition-violation}
    \mathcal{J}[h] = \int_0^{T_0} \big(Q(t,h) - Q_{\textrm{Chu}}\big) \,\textrm{d}t \leq -\alpha \,Q_{\textrm{Chu}}.
\end{equation}

\noindent To find the optimal $h(t)$, we use the calculus of variations by introducing a perturbation $\eta(t)$ such that $\eta(0) = \eta(T_0) = 0$. Let $\widehat{h}(t) = h(t) + \epsilon\, \eta(t)$. The extremum condition $\frac{\textrm{d}\mathcal{J}}{d\epsilon} \big|_{\epsilon=0} = 0$ yields
\begin{equation}\label{eq:J=0}
    \int_{0}^{T_0} \left( \frac{\partial \mathcal{L}}{\partial h} \,\eta(t) + \frac{\partial \mathcal{L}}{\partial h'} \,\eta'(t) \right) \textrm{d}t = 0.
\end{equation}

\noindent Applying integration by parts to the second term gives
\begin{equation}\label{eq:integration-by-part}
    \begin{aligned}
        \int_{0}^{T_0} \frac{\partial \mathcal{L}}{\partial h'} \frac{\textrm{d}\eta}{\textrm{d}t} \textrm{d}t &= \left[ \frac{\partial \mathcal{L}}{\partial h'} \eta(t) \right]_0^{T_0} - \int_{0}^{T_0} \eta(t) \frac{\textrm{d}}{\textrm{d}t} \left( \frac{\partial \mathcal{L}}{\partial h'} \right) \textrm{d}t\\
        &= - \int_{0}^{T_0} \eta(t) \frac{\textrm{d}}{\textrm{d}t} \left( \frac{\partial \mathcal{L}}{\partial h'} \right) \textrm{d}t,
    \end{aligned}  
\end{equation}
where the last equality follows from the fact $\eta(0)=\eta(T_0)=0$. We then inject (\ref{eq:integration-by-part}) into (\ref{eq:J=0}) and factor out the common perturbation $\eta(t)$ to rewrite
\begin{equation}
\int_{0}^{T_0} \eta(t) \left[ \frac{\partial \mathcal{L}}{\partial h} - \frac{\textrm{d}}{\textrm{d}t} \left( \frac{\partial \mathcal{L}}{\partial h'} \right) \right] \textrm{d}t = 0.
\end{equation}
By the fundamental lemma of the calculus of variations, if this integral vanishes for any arbitrary smooth function $\eta(t)$, then the integrand itself must be zero for all $t \in [0, T_0]$. This directly yields the so-called Euler-Lagrange condition
\begin{equation}\label{eq:Eurler-Lagrange}
\frac{\partial \mathcal{L}}{\partial h} - \frac{\textrm{d}}{\textrm{d}t} \left( \frac{\partial \mathcal{L}}{\partial h'} \right) = 0.
\end{equation}

\noindent To evaluate (\ref{eq:Eurler-Lagrange}) for our context, we use the definition of the instantaneous $Q$-factor $Q(t,h)$ in (\ref{eq:Q-general-f}) to get
\begin{subequations}\label{eq:derivatives}
    \begin{align}
        \frac{\partial \mathcal{L}}{\partial h} &= \frac{\omega L_0}{D(h')},\quad\frac{\partial \mathcal{L}}{\partial h'} = - \frac{N(h) L_0}{D(h')^2},\\
        \frac{\textrm{d}}{\textrm{d}t} \left( \frac{\partial \mathcal{L}}{\partial h'} \right) &= -L_0 \left( \frac{D(h') \omega L_0 h' - 2N(h) L_0 h''}{D(h')^3} \right).
    \end{align}
\end{subequations}

\noindent After substituting these terms into (\ref{eq:Eurler-Lagrange}), we arrive at the governing non-linear second-order differential equation:
\begin{equation}\label{eq:governing_ODE_1}
    \omega L_0 D(h')^2 + \omega L_0^2 D(h') h' - 2 \,N(h) L_0^2 h'' = 0.
\end{equation}
Substituting $N(h)$ and $D(h')$ from (\ref{eq:Q-general-f}) yield
\begin{equation}\label{eq:governing_ODE}
    \begin{aligned}
    &2(h')^2 + \frac{3 (R_a+R_m)}{L_0} h' \\
    &\hspace{1cm}+ \left( \frac{R_a+R_m}{L_0} \right)^2 = 2 \left( \frac{L_a}{L_0} + 1 + h \right) h''.    
    \end{aligned}
\end{equation}
This equation, which we term the inductive modulation condition, defines the necessary trajectory for $h(t)$ to optimize the quality factor excess relative to the Chu limit. It reveals the relationship between the modulation velocity $h'$ and the acceleration $h''$. It also indicates that the optimal path for switching is independent of the modulation period $T_0$. This independence suggests that the switching dynamics are a fundamental property of the circuit's constants (the ratios of $R_a$, $R_m$, and $L_0$), allowing the effect of the switching profile to scale across varying symbol rates.

\section{Optimality of Bang-Bang switching for inductive antenna modulation}

The governing differential equation \eqref{eq:governing_ODE} evaluates the efficiency of time-modulated reactances. Recall from (\ref{eq:Q-general-f}) that the total effective resistance of the modulated antenna is
\begin{equation}\label{eq:denum}
    \begin{aligned}
    R_{\textrm{eff}}(t) = R_a + R_m + R_{\textrm{mod}}(t)
    = R_a + R_m + L_0 h'(t).
    \end{aligned}
\end{equation}

\noindent For any modulation function $h(t) \in C^1$, the non-zero derivative $h'(t)$ induces a persistent activation of the modulation resistance $R_{\textrm{mod}}$. This leads to an increase in the total effective dissipation, $R_{\textrm{eff}}(t)$, which monotonically suppresses the instantaneous quality factor. Mathematically, the gains in the energy-storage numerator $N(h)$ are effectively neutralized by the concurrent rise in the power-dissipating denominator $D(h')$. This inherent coupling between the rate of change of the reactance and the associated modulation resistance establishes a fundamental efficiency ceiling. Any differentiable trajectory $h(t) \in C^1$ in (\ref{eq:governing_ODE}) incurs a non-zero modulated resistance within the denominator $D(h)$ in (\ref{eq:denum}). The following lemma confirms that smooth modulation functions inherently limit the instantaneous quality factor.

\begin{lemma} The functional $\mathcal{J}$ in (\ref{eq:condition-violation}) is maximized when the modulation function $h(t)$ belongs to the class of piecewise-constant functions $\Psi$. 
\end{lemma}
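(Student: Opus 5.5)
The plan is a pointwise comparison of the integrand of $\mathcal{J}$ in (\ref{eq:condition-violation}): I show that an admissible $C^1$ modulation can always be replaced by a piecewise-constant one with a larger integrand.

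\textbf{Standing assumptions.} I adopt the same admissibility assumptions as Lemma~\ref{lemma:Q-bound}. The modulation $h$ is Lipschitz on $[0,T_0]$, so $h'$ exists almost everywhere, and it takes values in a bounded range $[h_{\min},h_{\max}]$. The modulation never returns net power to the circuit, i.e.\ $R_{\textrm{mod}}(t)=L_0h'(t)\ge 0$ almost everywhere, so that $D(h')>0$. For $\psi\in\Psi$ with finitely many jumps, I use the convention implicit in (\ref{eq:Q-general-f}): $Q(t,\psi)$ is evaluated almost everywhere, and the jump instants form a null set that does not contribute to $\mathcal{J}$. This convention is where the real content lies. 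I expect it to be the main obstacle, because the distributional derivative of $\psi$ contains Dirac masses that formally enter $D(h')$. I will justify it by arguing that $\mathcal{J}$ only sees the absolutely continuous part of the derivative; the energy exchanged at a jump is a lumped event rather than a sustained resistance over a set of positive measure.

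\textbf{Step 1 (pointwise bound).} For any admissible $h$, since $N(h)>0$ and $L_0h'\ge 0$,
\begin{equation*}
Q(t,h)=\frac{N(h(t))}{R_a+R_m+L_0h'(t)}\le \frac{N(h(t))}{R_a+R_m}.
\end{equation*}
Equality holds exactly where $h'(t)=0$. Consequently every $\psi\in\Psi$ attains the bound almost everywhere, and
\begin{equation*}
\mathcal{J}[\psi]=\int_0^{T_0}\Big(\tfrac{N(\psi(t))}{R_a+R_m}-Q_{\textrm{Chu}}\Big)\,\textrm{d}t .
\end{equation*}

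\textbf{Step 2 (staircase domination).} Because $h'\ge 0$, any admissible $h$ is nondecreasing. Take a partition $0=t_0<\dots<t_n=T_0$ and define the upper staircase $\psi_n(t)=h(t_{k+1})$ for $t\in[t_k,t_{k+1})$. Then $\psi_n\in\Psi$, it stays in the admissible range, and it satisfies $\psi_n\ge h$. Since $N$ is increasing in $h$, Step~1 gives
\begin{equation*}
Q(t,\psi_n)=\frac{N(\psi_n(t))}{R_a+R_m}\ge\frac{N(h(t))}{R_a+R_m}\ge Q(t,h)
\end{equation*}
almost everywhere. Hence $\mathcal{J}[\psi_n]\ge\mathcal{J}[h]$. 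The inequality is strict whenever $h'>0$ on a set of positive measure, i.e.\ whenever $h$ is not already in $\Psi$. So every non-piecewise-constant admissible $h$ is strictly dominated by a member of $\Psi$, and the supremum of $\mathcal{J}$ can be sought within $\Psi$. On $\Psi$ the functional reduces to the expression in Step~1, which is increasing in $\psi$ pointwise, so it is attained by $\psi\equiv h_{\max}$ (or by a switching profile once further constraints such as a prescribed swing are imposed).

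\textbf{Step 3 (consistency with the Euler--Lagrange condition).} As a check, I will note that a constant $h$ makes the left side of (\ref{eq:governing_ODE}) equal to $((R_a+R_m)/L_0)^2>0$ while the right side is zero. So no smooth trajectory with vanishing modulation resistance is stationary. More generally, the smooth stationary curves of (\ref{eq:governing_ODE}) all carry $h'\neq 0$ and are dominated by Step~2. This confirms that the optimum lies on the boundary of the admissible class, in $\Psi$, rather than at an interior $C^1$ extremum.
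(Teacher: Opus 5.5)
Your proof is correct under your stated assumptions, and it rests on the same observation as the paper's: $D(h')=R_a+R_m+L_0h'$ attains its minimum $R_a+R_m$ exactly where $h'=0$, so holding $h$ at $h_{\max}$ maximizes $Q(t,h)$. You handle the two delicate points differently, though.

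\textbf{Transitions.} The paper makes them harmless physically, by placing instantaneous switches at current zeros $I(t)=0$ so that the exchanged energy $\tfrac12|I|^2\Delta L$ vanishes. You make them harmless inside the functional, by evaluating $Q(t,\psi)$ with the a.e.\ classical derivative so that jump instants form a null set. For the statement as posed, yours is the more coherent choice: $|I|^2$ cancels from \eqref{eq:Q-general-f}, so current zeros never enter $\mathcal{J}$. The paper's route does at least say how a real switch should be timed.

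\textbf{Sign hypothesis.} You make explicit the assumption $R_{\textrm{mod}}=L_0h'\ge 0$, which the paper uses only tacitly when it calls $R_a+R_m$ the ``physical minimum.'' This is not cosmetic. Under the weaker reading $D(h')>0$ the lemma is false: append to the constant $h_{\max}$ a sufficiently short final ramp $h'=-r$ with $0<L_0r<R_a+R_m$. Then $Q(t,h)>N(h_{\max})/(R_a+R_m)$ throughout the ramp, so this profile beats every member of $\Psi$.

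\textbf{What your route buys.} You get a genuine inequality with strictness, and it exposes that the maximizer is the constant $h\equiv h_{\max}$. In fact Step~2 and the jump convention are dispensable: $Q(t,h)\le N(h_{\max})/(R_a+R_m)=Q(t,h_{\max})$ pointwise, strictly on $\{h'>0\}$, and that constant is itself Lipschitz.

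\textbf{What it costs.} The paper's heuristic hides two limitations. Admissible smooth modulations are forced to be nondecreasing, so the cosine profile \eqref{eq:Lt} is not even a competitor. And nothing in the argument produces an actual switch between two extrema; as your parenthetical concedes, that needs an extra constraint.
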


\begin{proof}
Consider an interval where the inductor is in a steady state $h(t) = h_{\textrm{max}}$. In this regime, $h' = 0$ and $h'' = 0$, reducing the denominator to its physical minimum $R_a + R_m$. Under these conditions, the system achieves the maximum possible $Q$-factor. Transitioning between these steady states requires a non-zero $h'$. However, the energetic cost of this transition can be mathematically nullified if the transition velocity $f'$ occurs at the precise instants where $I(t) = 0$. By concentrating the modulation into instantaneous switches, the system bypasses the modulation resistance. The optimal control strategy then collapses into a Bang-Bang switching profile.
\end{proof}

\noindent It is worth highlighting that when using a switching function $h(t)$ between two discrete extrema, the modulation cycle undergoes a functional bifurcation:
\begin{itemize}[leftmargin=*]
    \item During the \textit{static states} where $h(t)$ remains at its maximum or minimum value, the modulation velocity $h'$ vanishes. Consequently, the modulation resistance $R_{\textrm{mod}}$ is zero, and hence maximizing the $Q$-factor.

    \item During the \textit{zero-crossing transition state}, the derivative $f'$ theoretically approaches a Dirac delta impulse. The overall energy remains zero if the state transition is synchronized when the current $I(t)=0$.
\end{itemize}

\noindent As a result, the Bang-Bang solution emerges not as a heuristic switching choice but as a mathematical necessity. It represents the only modulation class capable of maintaining $R_{\textrm{mod}} = 0$ for the vast majority of the duty cycle, thereby bypassing the inherent efficiency of smooth modulation functions. This facilitates pushing the $Q$-factor beyond the Chu limit.

\section{Bounding the switching time limit for Antenna modulation}

Every antenna acts as a resonator that stores energy in its surrounding electric and magnetic fields \cite{pozar2011microwave}. When the drive signal is removed, this energy does not vanish instantly. It follows an exponential decay governed by the antenna's quality factor $Q$ and the carrier frequency $f_c$. The field amplitude $A(t)$ is expressed as
\begin{equation}\label{eq:field-resonator}
    A(t) = A_0 \,e^{-t/\tau_{\text{ant}}}
\end{equation}
where the antenna time constant is defined by
\begin{equation}\label{eq:tau-ant}
    \tau_{\text{ant}} = \frac{Q}{\pi f_c}.
\end{equation}

\noindent The antenna time constant $\tau_{\text{ant}}$ serves as an indicator for the ``memory'' inherent to the radiator's reactive near-fields. In the absence of state-switching, an electrically small antenna (ESA) behaves as a high-$Q$ resonator that continues to oscillate well beyond the ending time of the driving signal. This persistent natural response is fundamentally bounded by the Chu-Harrington limit on energy storage and creates a temporal overlap between successive symbols. The resulting inter-symbol interference (ISI) imposes a rigid constraint on the achievable error-free data rate in conventional LTI systems. For this reason, it is important to use DAM to actively reset the field states within the switching period.

To send data reliably, we must decide how unexcited the antenna must be before the excitation starts. We define the isolation requirement $S$ in dB as the target signal-to-interference ratio. Mathematically, we solve for the time $t_{\text{decay}}$ it takes for the field in (\ref{eq:field-resonator}) to drop below a threshold $S$. That is to say,
\begin{equation}\label{eq:tdelay}
    e^{-t_{\text{decay}}/\tau_{\text{ant}}} \leq 10^{-S/20} \iff t_{\text{decay}} \geq \tau_{\text{ant}} \,\ln\big(10^{S/20}\big).
\end{equation}
Injecting back the expression of $\tau_{\text{ant}}$ in (\ref{eq:tau-ant}) back into (\ref{eq:tdelay}) yields the expression of the smallest time delay $t_{\text{decay}}$:
\begin{equation}\label{eq:tdelay2}
    t_{\text{decay}} = \frac{Q}{\pi f_c} \ln(10^{S/20}).
\end{equation}

For antenna modulation, we use a switch with a physical transition time $\tau_{\text{sw}}$ to reset the antenna state faster than it would happen naturally. However, we cannot spend the entire symbol period switching. For this reason, we introduce a fractional occupancy parameter $\beta$ to model the maximum allowable fraction of a symbol duration we can use for switch transition. For antenna modulation to be effective, we require that the time allocated for switch transition $\tau_{\text{sw}}$  must be faster than a fraction of the antenna's natural memory \vspace{-0.1cm}
\begin{equation}\label{eq:tswitch}
    \tau_{\text{sw}} < \beta \cdot t_{\text{decay}}.
\end{equation}
Injecting (\ref{eq:tdelay2}) back into (\ref{eq:tswitch}) yields the following upper-bound on the switching time\vspace{-0.1cm}
\begin{equation}
    \tau_{\text{sw}} < \frac{\beta \cdot Q \cdot \ln(10^{S/20})}{\pi f_c}.
\end{equation}
Substituting the $Q$-factor with its Chu limit upper-bound $Q=1/(k \rho)^3$ gives\vspace{-0.1cm}
\begin{equation}
    \tau_{\text{sw}} < \frac{\beta \ln\big(10^{S/20}\big)}{\pi f_c (k \rho)^3}.
\end{equation}
This bound reveals a counterintuitive observation: the smaller the antenna size $\rho$ is, the higher the bound on the switching time. This means that the hardware requirement for the switch is easier to meet for compact antennas because the switch has a larger window of time to truncate the natural decay.

Finally, it is possible to involve a desired level of bit error rate  (BER) by connecting it to the isolation requirement $S$. For instance, for BPSK modulation, $S$ is linked to $P_e$ through the inverse of the error function $Q_{\textrm{inv}}(\cdot)$ as follows \cite{proakis2001digital}
\begin{equation}
    S = 10 \log_{10} \left[ \frac{1}{2} \left( Q_{\text{inv}}^{-1}(P_e) \right)^2 \right],
\end{equation}
from to which we obtain the switching time as a function of the bit error rate as follows:
\begin{equation}
    \tau_{\text{sw}} < \frac{\beta}{2\pi f_c (k\rho)^3} \ln \left[ \frac{1}{2} \left( Q_{\text{inv}}^{-1}(P_e) \right)^2 \right]
\end{equation}

\noindent Note that this bound contains the term $\ln\left[\frac{1}{2}\left(Q_{\text{inv}}(P_e)\right)^2\right]$ which is positive only when $Q_{\text{inv}}(P_e) > \sqrt{2}$, or equivalently when $P_e < Q(\sqrt{2}) \approx 0.0786$. Above that BER, the isolation requirement $S$ means DAM offers little advantage.

Fig.~\ref{fig:heatmap-Qchu-excess} shows the fundamental trade-off
between antenna miniaturization, the decoder reliability for BPSK modulation, and the switching speed for DAM. Smaller antennas with gives the switch more time to reset the stored energy before the next symbol. For example, at a BER of $10^{-5}$, an antenna with
$\rho/\lambda = 0.02$ requires only a switch faster than roughly
$10$\,ns. This is achievable with commercial GaN FETs. However, an antenna with $\rho/\lambda = 0.08$ demands sub-nanosecond transitions. The red dashed
line marks the BER threshold ($P_e \approx 0.0786$) above which the
bound becomes infeasible.\vspace{-0.4cm}

\begin{figure}[h!]
    \centering
    \includegraphics[scale=0.3]{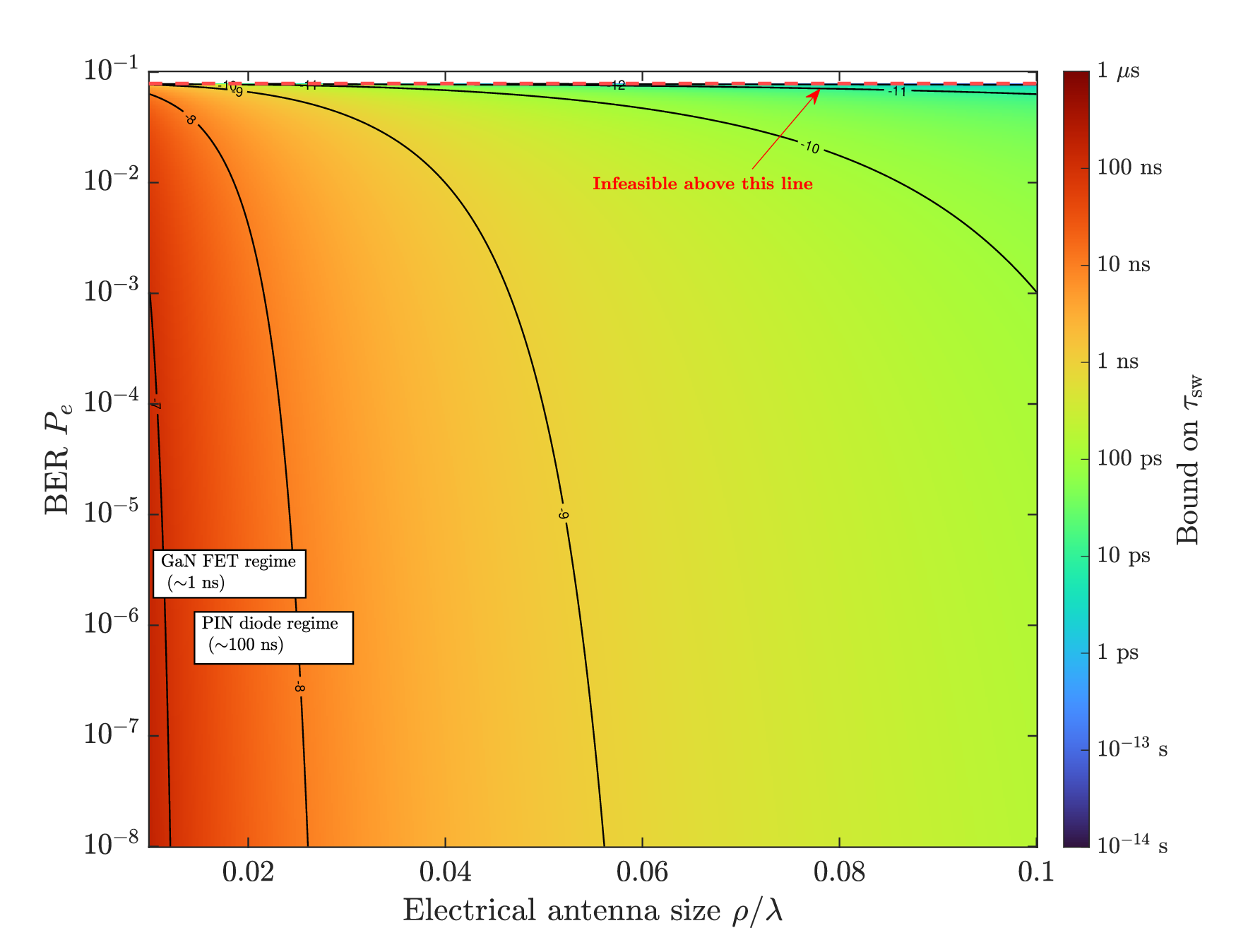}
    \caption{Heatmap of the maximum allowable switch transition time
    $\tau_{\mathrm{sw}}^{\max}$ as a function of electrical antenna size
    $\rho/\lambda$ and bit error rate $P_e$, with $f_c = 1$\,GHz, and $\beta = 0.25$. The red dashed line marks the critical BER threshold
    $P_e \approx 0.0786$, above which the bound is infeasible.}
    \label{fig:heatmap-Qchu-excess}
    \vspace{-0.1cm}
\end{figure}

\vspace{0.8cm}
\section{Conclusion}
We have shown that a Bang-Bang switching profile is the analytical necessity for surpassing the Chu limit, as it bypasses the efficiency penalties of smooth modulations. The derived differential  equation confirms that the optimal switching trajectory is fundamental to the system's constants and scale across symbol rates. Finally, we also established a bound on the switching time showing that antenna miniaturization widens the time window for direct antenna modulation.

\bibliographystyle{IEEEtran}
\bibliography{esa_mod_refs_rc,references}

@ARTICLE{BroadbooksEtAlNovelContinuousDirectAntenna2025,
	author = {Broadbooks, Ian M. and Smith, Matthew C. and Radhakrishnan, Rohith M. and Sievenpiper, Daniel F.},
	journal = {IEEE Trans. Antennas Propag.},
	title = {A Novel Continuous Direct Antenna Modulation System Through Varactor Diode Tuning},
	year = {2025},
	volume = {73},
	number = {4},
	pages = {2416-2426},
	doi = {10.1109/TAP.2025.3540276},
	ISSN = {1558-2221},
	month = {Apr.}
}

@ARTICLE{SchabEtAlEnergySynchronousDirectAntenna2020,
	author = {Schab, Kurt and Huang, Danyang and Adams, Jacob J.},
	journal = {IEEE Open J. Antennas Propag.},
	title = {An Energy-Synchronous Direct Antenna Modulation Method for Phase Shift Keying},
	year = {2020},
	volume = {1},
	number = {},
	pages = {41-46},
	doi = {10.1109/OJAP.2020.2972842},
	ISSN = {2637-6431},
	month = {}
}

@ARTICLE{ChurchEtAlUhfElectricallySmallBox2014,
	author = {Church, Justin and Chieh, Jia-Chi Samuel and Xu, Lu and Rockway, John D. and Arceo, Diana},
	journal = {IEEE Antennas Wireless Propag. Lett.},
	title = {{UHF} Electrically Small Box Cage Loop Antenna With an Embedded Non-{Foster} Load},
	year = {2014},
	volume = {13},
	number = {},
	pages = {1329-1332},
	doi = {10.1109/LAWP.2014.2337112},
	ISSN = {1548-5757},
	month = {Jul.}
}

@ARTICLE{LoghmanniaManteghiBroadbandParametricImpedanceMatching2022,
	author = {Loghmannia, Pedram and Manteghi, Majid},
	journal = {IEEE Antennas Propag. Mag.},
	title = {Broadband Parametric Impedance Matching for Small Antennas Using the {Bode-Fano} Limit: Improving on {Chu}’s limit for loaded small antennas},
	year = {2022},
	volume = {64},
	number = {5},
	pages = {55-68},
	doi = {10.1109/MAP.2021.3089997},
	ISSN = {1558-4143},
	month = {Oct}
}

@article{JayathurathnageEtAlTimeVaryingComponentsEnhancing2021,
	title = {Time-Varying Components for Enhancing Wireless Transfer of Power and Information},
	author = {Jayathurathnage, Prasad and Liu, Fu and Mirmoosa, Mohammad S. and Wang, Xuchen and Fleury, Romain and Tretyakov, Sergei A.},
	journal = {Phys. Rev. Appl.},
	volume = {16},
	issue = {1},
	pages = {014017},
	numpages = {15},
	year = {2021},
	month = {Jul}
}

@article{MekawyEtAlParametricEnhancementRadiationFrom2021,
	title = {Parametric Enhancement of Radiation from Electrically Small Antennas},
	author = {Mekawy, Ahmed and Li, Huanan and Radi, Younes and Al\`u, Andrea},
	journal = {Phys. Rev. Appl.},
	volume = {15},
	issue = {5},
	pages = {054063},
	numpages = {11},
	year = {2021},
	month = {May}
}

@ARTICLE{FrittsEtAlSpaceTimeModulationMultimode2025,
	author = {Fritts, Zachary and Babaee, Amirhossein and Young, Steve M. and Grbic, Anthony},
	journal = {IEEE Trans. Antennas Propag.},
	title = {Space–Time Modulation of a Multimode Electrically Small Antenna for Increased Matching and Efficiency Bandwidths},
	year = {2025},
	volume = {73},
	number = {3},
	pages = {1308-1320},
	doi = {10.1109/TAP.2024.3508081},
	ISSN = {1558-2221},
	month = {Mar.}
}

@article{LiEtAlBeyondChusLimitWith2019,
	title = {Beyond {Chu's} Limit with {Floquet} Impedance Matching},
	author = {Li, Huanan and Mekawy, Ahmed and Al\`{u}, Andrea},
	journal = {Phys. Rev. Lett.},
	volume = {123},
	issue = {16},
	pages = {164102},
	numpages = {6},
	year = {2019},
	month = {Oct},
}

@article{chu1948physical,
  title={Physical limitations of omni-directional antennas},
  author={Chu, Lan Jen},
  journal={J. Appl. Phys.},
  volume={19},
  number={12},
  pages={1163--1175},
  year={1948},
  publisher={American Institute of Physics}
}

@book{proakis2001digital,
  title={Digital communications},
  author={Proakis, John G and Salehi, Masoud},
  volume={4},
  year={2001},
  publisher={McGraw-hill New York}
}

@book{pozar2011microwave,
  title={Microwave engineering},
  author={Pozar, David M},
  year={2011},
  publisher={John wiley \& sons}
}

@article{skrivervik2002pcs,
  title={{PCS} antenna design: The challenge of miniaturization},
  author={Skrivervik, A Ks and Zurcher, J-F and Staub, O and Mosig, JR},
  journal={IEEE Antennas Propag. Mag},
  volume={43},
  number={4},
  pages={12--27},
  year={2002},
  publisher={IEEE}
}

@article{mostafa2023antenna,
  title={Antenna bandwidth engineering through time-varying resistance},
  author={Mostafa, MH and Ha-Van, N and Jayathurathnage, P and Wang, Xuchen and Ptitcyn, G and Tretyakov, SA},
  journal={Applied Physics Letters},
  volume={122},
  number={17},
  year={2023},
  publisher={AIP Publishing}
}
\end{document}